\newtheorem{observation}{Observation}
\title{Unit Disk Cover Problem}
\author{Rashmisnata Acharyya\inst{1}, Manjanna B\inst{2} \and Gautam K. Das\inst{2}}
\institute{Department of Computer Science and Engineering,\\ Tezpur University,
Assam, 786028, India
 \and
Department of Mathematics,\\Indian Institute of Technology, Guwahatii, 781 - 039, India}
\date{}
\begin{document}
\maketitle
\input{psfig.sty}

\begin{abstract}
Given a set ${\cal D}$ of unit disks in the Euclidean plane, we consider (i) the
{\it discrete unit disk cover} (DUDC) problem and (ii) the {\it rectangular
region cover} (RRC) problem. In the DUDC problem, for a given set ${\cal P}$
of points the objective is to select minimum cardinality subset
${\cal D}^* \subseteq {\cal D}$ such that each point in ${\cal P}$ is covered by at
least one disk in ${\cal D}^*$. On the other hand, in the RRC problem the objective
is to select minimum cardinality subset ${\cal D}^{**} \subseteq {\cal D}$ such that
each point of a given rectangular region ${\cal R}$ is covered by a disk in
${\cal D}^{**}$. For the DUDC problem, we propose an $(9+\epsilon)$-factor
($0 < \epsilon \leq 6$) approximation algorithm. The previous best known approximation 
factor was 15 \cite{FL12}. For the RRC problem, we propose (i) an $(9 + \epsilon)$-factor
($0 < \epsilon \leq 6$) approximation algorithm, (ii) an 2.25-factor approximation
algorithm in reduce radius setup, improving previous 4-factor approximation result in the
same setup \cite{FKKLS07}.

The solution of DUDC problem is based on a PTAS for the subproblem LSDUDC, where all
the points in ${\cal P}$ are on one side of a line and covered by the disks centered on
the other side of that line.
\end{abstract}

\section{Introduction}
In the {\it unit disk cover} (UDC) problem, we consider two problems, namely
the {\it discrete unit disk cover} (DUDC) problem and the {\it rectangular region cover}
(RRC) problem. In the DUDC problem, given a set ${\cal P} = \{p_1, p_2, \ldots, p_n\}$
of $n$ points and a set ${\cal D} = \{d_1, d_2, \ldots, d_m\}$ of $m$ unit disks
in the Euclidean plane, we wish to determine the minimum cardinality set
${\cal D}^* \subseteq {\cal D}$ such that ${\cal P} \cap {\cal D}^* = {\cal P}$.
In the {\it rectangular region cover} (RRC) problem, given a rectangular region ${\cal R}$
and a set ${\cal D} = \{d_1, d_2, \ldots, d_m\}$ of $m$ unit disks in the Euclidean plane,
the objective is to determine the minimum cardinality set ${\cal D}^{**} \subseteq {\cal D}$
such that ${\cal R} \cap {\cal D}^{**} = {\cal R}$. The DUDC and RRC problems are a
geometric version of the general set cover problem which is known to be NP-complete \cite{GJ79}.
The general set cover problem is not approximable within $c \log n$, for some constant $c$,
where $n$ is the size of input. Unfortunately, both the DUDC and RRC problems are also NP-complete
\cite{GJ79}, but unlike general set cover problem, DUDC and RRC problems admit a constant
factor approximation results. These two problems have been studied extensively due to their
wide applications in wireless networks \cite{CDDDFLNS10,FKKLS07,YMFXZ12}.

\subsection{Related work}
The DUDC problem has a long history in the literature. It is a NP-complete problem \cite{GJ79}.
The first constant factor approximation algorithm has been proposed by Br\"{o}nnimann and Goodrich
\cite{BG95}. Their algorithm is based on the epsilon nets concept. After that many authors proposed
constant factor approximation algorithm for the DUDC problem \cite{AEMN06,CDDDFLNS10,CMWZ04,CKL07,MR09,NV06}.
A summary of such results are available in \cite{DFLN11}. Using local search, Mustafa and Ray \cite{MR09}
proposed a PTAS for the DUDC problem. The time complexity of their PTAS is
$O(m^{2.(\frac{8\sqrt{2}}{\epsilon})^2+1}n)$ for $0 < \epsilon \leq 2$.
Therefore, the fastest operation of this algorithm is obtained when $\epsilon = 2$ for a 3-factor approximation
result in $O(m^{65}n)$ time, which is not practical even for $m = 2$. This leads to further research on the
DUDC problem for finding constant factor approximation algorithm with reasonable running time. Das et al. \cite{DFLN11}
proposed an 18-factor approximation algorithm. The running time of their algorithm is
$O(mn + n \log n + m\log m)$. Recently, Fraser and L\'{o}pez-Ortiz \cite{FL12} proposed an 15-factor approximation
algorithm for the DUDC problem, which runs in $O(m^6n)$ time. Das et al. \cite{DDN10} studied a restricted version
of the DUDC problem, where all the centers of disks in ${\cal D}$ are within a unit disk and all the points in ${\cal P}$
are outside of that unit disk. They proposed an 2-factor approximation algorithm for this restricted version of the
DUDC problem, which runs in $O((m+n)^2)$ time.

In the way to solve DUDC problem, some authors consider a restricted version of the DUDC problem.
In the literature it is known as {\it line-separable discrete unit disk cover} (LSDUDC) problem
\cite{CDDDFLNS10}. In this problem, the plane being divided into two half-planes $\ell^+$ and
$\ell^-$ defined by a line $\ell$, all the points in ${\cal P}$ are in $\ell^+$ and the centers of disks
in ${\cal D}$ are in $\ell^+ \cup \ell^-$ such that each point in ${\cal P}$ is covered by at least
one disk centered in $\ell^-$. Carmi et al. \cite{CKL07} described an 4-factor approximation algorithm
for the LSDUDC problem. Latter, Claude et al. \cite{CDDDFLNS10} proposed an 2-factor approximation algorithm
for LSDUDC problem. Another restricted version of the DUDC problem is {\it within strip discrete unit disk cover}
(WSDUDC) problem. In this problem, all the points in ${\cal P}$ and center of the disks in ${\cal D}$ are within
a strip of width $h$. Das et al. \cite{DFLN11} proposed an 6-factor approximation algorithm for $h = 1/\sqrt{2}$.
Latter, Fraser and L\'{o}pez-Ortiz \cite{FL12} proposed an $3 \lceil 1/{\sqrt{1-h^2}}\rceil$-factor approximation
result for $0 \leq h < 1$. They also proposed an 3-factor (resp. 4-factor) algorithm for $h \leq 4/5$
(resp. $h \leq 2 \sqrt{2}/3$).

Agarwal and Sharir \cite{AS98} studied Euclidean $k$-center problem. In this problem, a set ${\cal P}$ of $n$ points,
a set ${\cal O}$ of $m$ points, and an integer $k$ are given. The objective is to find $k$ disks centered at
the points in ${\cal O}$ such that each point in ${\cal P}$ is covered by at least one disk and the radius
of largest disk is minimum. This problem is known to be NP-hard \cite{AS98}. For fixed $k$, Hwang et al. \cite{HLC93}
presented a $m^{O(\sqrt{k})}$-time algorithm. Latter, Agarwal and Procopiuc \cite{AP02} presented $m^{O(k^{1-1/d})}$-time
algorithm for the $d$-dimensional points. Fowler et al. \cite{FPT81} considered the minimum geometric disk cover problem,
where input is a set ${\cal P}$ of points in the Euclidean plane and the objective is to compute minimum cardinality set
${\cal X}$ of unit disks such that each point in ${\cal P}$ is covered by at least one disks in ${\cal X}$. They proved
that the problem is NP-hard. Hochbaum and Maass \cite{HM85} proposed a PTAS for the geometric disk cover problem.

A {\it sector} is a maximal region formed by the intersection of a set of disks such that all points within the sector
are covered by the same set of disks. Funke et al. \cite{FKKLS07} proposed {\it greedy sector cover} algorithm for RRC
problem. The approximation factor of this algorithm is $O(\log w)$, where $w$ is the maximum number of sectors covered
by a single disk. They proved that the greedy sector cover algorithm has an approximation algorithm no better than
$\Omega(\log w)$. In the same paper, they proposed grid placement algorithm (based on the algorithm proposed by
Bose et al. \cite{BMMM01}) and proved that their algorithm produce $18\pi$-factor approximation
result. Though the algorithm is not guaranteeing full coverage of the region of interest, the area that
remains uncovered can be bounded by the number of chosen grids. In the same paper, they have also considered RRC problem
in different setup. We denote this setup as {\it reduced radius} setup. In the reduce radius setup, the region
of interest ${\cal R}$ is also covered by the disks in ${\cal D}$ after reducing their radius to $(1-\delta)$. $\delta$ is
said to be {\it reduce radius parameter}. The reduce radius setup has many applications in wireless sensor networks, where
coverage remains stable under small perturbations of sensing ranges/positions. In this setup an algorithm ${\cal A}$ is said
to be $\beta$-factor approximation if $\frac{|{\cal A}_{out}|}{|opt|} \leq \beta$, where ${\cal A}_{out}$ is the output of algorithm
${\cal A}$ and $opt$ is the optimum set of disks with reduced radius covering the region of interest. In reduce radius setup,
Funke et al. \cite{FKKLS07} proposed an 4-factor approximation algorithm for RRC problem.

\subsection{Our Results}
We provide a PTAS (i.e., $(1+\mu)$-factor approximation algorithm) for LSDUDC problem, which runs in
$O(m^{2(1+ \frac{1}{\mu})}n)$ time ($0 < \mu \leq 1$). Using this PTAS, we present an $(9+\epsilon)$-factor approximation 
algorithm for DUDC problem in $O(\max(m^{2(1+ \frac{6}{\epsilon})}n, m^6n)$ time, where $0 < \epsilon \leq 6$. For the 
RRC problem, we describe an $(9+\epsilon)$-factor approximation algorithm using the algorithm for DUDC. We also propose 
an 2.25-factor approximation algorithm in the reduce radius setup. The previous best known approximation factor was 4 \cite{FKKLS07}.

\section{PTAS for LSDUDC Problem}
Let $\ell$ be a horizontal line. We use $\ell^+$ (resp. $\ell^-$) to denote the half-plane
above (resp. below) the line $\ell$. For LSDUDC problem, we assume that each point in ${\cal P}$ is
in $\ell^-$ and center of each disk in ${\cal D}$ is in $\ell^+ \cup \ell^-$ such that union of the
disks centered in $\ell^+$ covers all points in ${\cal P}$. The objective is to find minimum cardinality
set ${\cal D}^* \subseteq {\cal D}$ such that each point in ${\cal P}$ is covered by at least one
disk in ${\cal D}^*$. Now, we want to define some terminology.

\begin{definition}
We use ${\cal U}$ (resp. ${\cal L}$) to denote the set of disks in ${\cal D}$ with centers in
$\ell^+$ (resp. $\ell^-$). We use $circ(d)$ (resp. $\alpha(d)$) to denote the boundary arc (resp. center)
of the disk $d$. A disk $d \in {\cal U}$ is said to be {\it lower boundary disk} if there does not exit
$X \subseteq {\cal U}\setminus \{d\}$ such that $d \cap \ell^- \subset (\cup_{D \in X} D) \cap \ell^-$.
For a lower boundary disk $d \in {\cal U}$, we use the term {\it lower region} to denote the region
$d \cap \ell^-$ and {\it lower arc} to denote the arc $circ(d) \cap \ell^-$ (see Fig. \ref{figure-1}).
We use $D_\ell = \{d^1, d^2, \ldots, d^s\} \subseteq {\cal U}$ to denote the set of all lower boundary
disks. We use $B_{region}$ to denote the region covered by the disks in
$D_\ell$.
\end{definition}

\begin{figure}[h]
\centering
\includegraphics[width=3in]{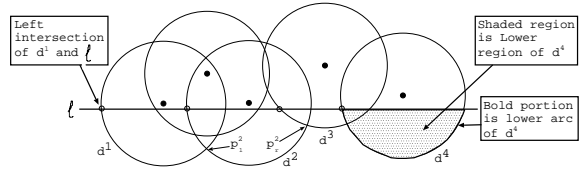}
\caption{Lower region, left intersection and lower boundary disks}
\label{figure-1}
\end{figure}

Needless to mention that each disk in $D_\ell$ intersect the horizontal line $\ell$. Without
loss of generality assume that $d^1, d^2, \ldots, d^s$ is the sorted order from left to
right based on their left intersection point with the line $\ell$ (see Fig. \ref{figure-1}).
Since centers of the disks in $D_\ell$ are in $\ell^+$ the number of intersection (if any) of two
disks of $D_\ell$ in $\ell^-$ is one. For each disk $d^i \in D_\ell$ we define two points, namely
$p^i_l$ and $p^i_r$ as follows:

\begin{description}
\item[$p^i_l$:] If the disk $d^i$ has intersection with $d^{i-1}$ in $\ell^-$, then $p^i_l$ is the
intersection point between $circ(d^{i-1})$ and $circ(d^i)$ in $\ell^-$, otherwise $p^i_l$ is the
left intersection point between $\ell$ and $circ(d^i)$.

\item[$p^i_r$:] If the disk $d^i$ has intersection with $d^{i+1}$ in $\ell^-$, then $p^i_r$ is the
intersection point between $circ(d^{i+1})$ and $circ(d^i)$ in $\ell^-$, otherwise $p^i_r$ is the
right intersection point between $\ell$ and $circ(d^i)$.
\end{description}

Where $d^0$ and $d^{s+1}$ are the two dummy disks having no intersection with $d^1$ and $d^s$ respectively.
For each $i = 1, 2, \ldots, s$ let ${\cal P}_i (\subseteq {\cal P})$ be the set of points lies between two
vertical lines through $p^i_l$ and $p^i_r$. Let $e^i$ be the vertical line through the point $p^i_r$
for $i = 1, 2, \ldots, s$. We use $e^{i-}$ (resp. $e^{i+})$ to denote the region in the left (resp. right)
side of the vertical line $e^i$. Let $D^{i-}$ (resp. $D^{i+})$ be the optimum cover of the points in
${\cal P} \cap e^{i-}$ (resp. ${\cal P} \cap e^{i+}$).

\begin{algorithm}
\caption{LSDUDC$({\cal P}, {\cal D}, k, \ell)$}
\begin{algorithmic}[1]
\STATE {\bf Input:} Set ${\cal P}$ of points, set ${\cal D}$ of unit disks, a positive integer $k$ and a
                    horizontal line $\ell$ such that ${\cal P} \cap \ell^- = {\cal P}$ and union of the
                    disks centered in $\ell^+$ covers all the points in ${\cal P}$.
\STATE {\bf Output:} Set ${\cal D}^* \subseteq {\cal D}$ of disks covering all the points in ${\cal P}$.\\

\STATE Set ${\cal D}^* \leftarrow \emptyset$
\STATE Find lower boundary disks set $D_\ell$ and arrange them from left to right as defined above. Let
$D_\ell = \{d^1, d^2, \ldots, d^s\}$ be the lower boundary disks from left to right.
\FOR {($i = 1, 2, \ldots s$)}
    \STATE Compute the set ${\cal P}_i (\subseteq {\cal P})$
\ENDFOR
\STATE Set $i \leftarrow 1$
\WHILE{($i \leq s$)}
\STATE Find the maximum index $j$ such that $\cup_{\theta=i, i+1, \ldots j} {\cal P}_\theta$ is covered by 
a set ${\cal D}_1 (\subseteq {\cal D})$ of size $k$.
\STATE ${\cal D}^* = {\cal D}^* \cup {\cal D}_1$, $i \leftarrow j+1$
\ENDWHILE
\STATE Return ${\cal D}^*$
\end{algorithmic}
\label{alg:LSDUDC}
\end{algorithm}

\begin{observation} \label{observation-1}
For two disks $d', d'' \in {\cal D}$, if $d', d'' \in D^{i-}$ and $d', d'' \in D^{i+}$,
then both the disks $d'$ and $d''$ intersect $e^i$ and $circ(d')$ and $circ(d'')$ intersect
in $B_{region}$.
\end{observation}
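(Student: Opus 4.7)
The plan is to split the observation into its two assertions and handle them separately. First I would argue that any disk $d \in D^{i-} \cap D^{i+}$ must intersect the vertical line $e^i$. The argument uses minimality: since $D^{i-}$ is an optimal cover of $\mathcal{P} \cap e^{i-}$, the disk $d$ must be witnessed by at least one point $p_L \in \mathcal{P} \cap e^{i-}$ that it covers (otherwise $D^{i-} \setminus \{d\}$ would be a strictly smaller cover, contradicting optimality). Symmetrically, $d \in D^{i+}$ forces a witness $p_R \in \mathcal{P} \cap e^{i+}$ covered by $d$. Because $d$ is a (convex) disk containing points on both sides of the vertical line $e^i$, the segment $p_L p_R \subseteq d$ must cross $e^i$, so $d$ meets $e^i$. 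Applying this to $d'$ and $d''$ yields the first half of the claim.

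For the second half I first recall that $\mathcal{P} \subseteq B_{region}$: by hypothesis $\mathcal{P}$ is covered by disks in $\mathcal{U}$, and by the definition of lower boundary disks the restriction to $\ell^-$ of $\bigcup_{d \in \mathcal{U}} d$ coincides with $\bigcup_{d \in D_\ell} d \cap \ell^-$, so every point of $\mathcal{P}$ is contained in $B_{region}$. Consequently, the four witness points $p_L',p_R'$ (for $d'$) and $p_L'',p_R''$ (for $d''$) supplied by the previous paragraph all lie inside $B_{region}$.

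Now I would argue that $circ(d')$ and $circ(d'')$ must cross, and do so within $B_{region}$. The idea is to look at where each disk enters and leaves the strip $B_{region}$ while straddling $e^i$. Because $p_L', p_R' \in d' \cap B_{region}$ with $p_L'$ left and $p_R'$ right of $e^i$, the arc $circ(d')$ has points both entering and leaving $B_{region}$ on each side of $e^i$; the same is true for $circ(d'')$. Together with $d', d'' \in D^{i-}$ and $d', d'' \in D^{i+}$ (so neither disk is subsumed by the other on either side, by minimality again), one can show that the two arcs cannot be nested inside $B_{region}$, which forces them to cross inside $B_{region}$.

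The main obstacle I expect is the last step: verifying rigorously that the two arcs must actually cross inside $B_{region}$, as opposed to merely passing through it separately. This needs the non-nesting property, which in turn comes from the double minimality $d', d'' \in D^{i-} \cap D^{i+}$: if one arc were entirely inside the other within $B_{region}$, then one of the two disks would dominate the other on both sides of $e^i$ (since all relevant $\mathcal{P}$-points lie in $B_{region}$), contradicting optimality of one of $D^{i-}, D^{i+}$. The rest of the argument is a direct geometric consequence of the disks being unit circles crossing $e^i$ with witnesses on both sides.
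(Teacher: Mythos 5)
Your proposal follows essentially the same route as the paper's (very terse) proof: membership in both optimal covers $D^{i-}$ and $D^{i+}$ forces each disk to cover witness points on both sides of $e^i$ and hence to cross $e^i$, and if $circ(d')$ and $circ(d'')$ did not intersect in $B_{region}$ then one disk's coverage of ${\cal P}$ would contain the other's, contradicting optimality of $D^{i-}$ or $D^{i+}$. Your write-up is more explicit than the paper's two-sentence argument (the witness points, and the remark that ${\cal P} \subseteq B_{region}$), but the underlying ideas are identical, including the admitted hand-waving at the final non-nesting step.
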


\begin{proof}
Both the disks $d'$ and $d''$ intersect $e^i$ because $d', d'' \in D^{i-}$ and $d', d'' \in D^{i+}$.
Now, if $circ(d')$ and $circ(d'')$ does not intersect in $B_{region}$, then either
${\cal P} \cap d' \subseteq {\cal P} \cap d''$ or ${\cal P} \cap d'' \subseteq {\cal P} \cap d'$.
Therefore, both $d'$ and $d''$ can not appear in the solutions $D^{i-}$ and $D^{i+}$. \hfill $\Box$
\end{proof}

\begin{definition}
A pair $(d', d'')$ of disks is said to be week (resp. strong) cover pair if $circ(d')$ and
$circ(d'')$ intersect once (resp. twice) in $B_{region}$.
\end{definition}

\begin{lemma} \label{lemma-1}
For a week cover pair $(d', d'')$; $d', d'' \in D^{i-}$ and $d', d'' \in D^{i+}$ can not
happen simultaneously.
\end{lemma}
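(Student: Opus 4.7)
The plan is to argue by contradiction. Suppose $(d',d'')$ is a weak cover pair and that $d', d'' \in D^{i-}$ and $d', d'' \in D^{i+}$ simultaneously. First I would extract four ``private'' witnesses from the minimality of the two optimum covers: since neither disk can be deleted from $D^{i-}$ (resp.\ $D^{i+}$) without losing coverage of some point of ${\cal P} \cap e^{i-}$ (resp.\ ${\cal P} \cap e^{i+}$), there exist points $p_- \in (d'\setminus d'') \cap {\cal P} \cap e^{i-}$ and $q_- \in (d''\setminus d') \cap {\cal P} \cap e^{i-}$, and analogously $p_+, q_+$ on the other side of $e^i$. All four witnesses lie in ${\cal P} \subseteq B_{region}$, so each of the two lunes $L_1 = d'\setminus d''$ and $L_2 = d''\setminus d'$ contains points of $B_{region}$ strictly on each side of $e^i$.

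By Observation~\ref{observation-1} both disks cross $e^i$ and $circ(d') \cap circ(d'') \cap B_{region} \neq \emptyset$. Since $(d',d'')$ is weak, exactly one of the two boundary intersection points, call it $a$, lies in $B_{region}$, and the other, $b$, lies strictly outside. Now I would look at the vertical cross-section along $e^i$: the chord intervals $e^i \cap d'$ and $e^i \cap d''$ overlap (both disks cross $e^i$), and since each lune must straddle $e^i$, neither chord interval can be contained in the other. Hence reading from bottom to top, $e^i$ slices through $L_1$, then $d'\cap d''$, then $L_2$ (up to reflection). A short geometric observation shows that $e^i \cap B_{region}$ is itself a single vertical interval---at any fixed $x$-coordinate, the chords of the lower boundary disks stack through the line $\ell$ into one contiguous segment---so the witnesses of each lune can be joined along $e^i$ inside $B_{region}$, and both arcs $circ(d') \cap B_{region}$ and $circ(d'') \cap B_{region}$ must cross $e^i$ inside $B_{region}$ in this interleaved pattern.

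The hardest step is converting this interleaved picture into a contradiction with the weak hypothesis, and my plan is a Jordan-curve argument inside the simply connected region $B_{region}$. The arc of $circ(d')$ inside $B_{region}$ joining its two $e^i$-crossings, together with the segment of $e^i$ between them, forms a simple closed curve in $B_{region}$ that separates it into two components; by the interleaving established above, the two $e^i$-crossings of $circ(d'')$ inside $B_{region}$ lie in different components. Consequently, the arc of $circ(d'')$ inside $B_{region}$ must cross the arc of $circ(d')$ inside $B_{region}$ at a point other than $a$, producing a second intersection of $circ(d')$ and $circ(d'')$ inside $B_{region}$ and contradicting the weak hypothesis. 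The remaining technical obstacle is handling degenerate configurations---$a$ lying exactly on $e^i$, a chord interval on $e^i$ collapsing to a point, or the simply-connectedness of $B_{region}$ failing near a pinch point---which I would dispatch either by an infinitesimal perturbation of $\ell$ or by observing directly that in such degenerate cases the four private witnesses cannot coexist.
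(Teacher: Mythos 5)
Your overall strategy is sound and in places more careful than the paper's: the four private witnesses you extract from the optimality of $D^{i-}$ and $D^{i+}$ are exactly the right objects, and the statement you are really trying to prove (non-nested coverage on both sides of $e^i$ forces two boundary intersections inside $B_{region}$) is precisely the contrapositive of the assertion the paper makes without proof, namely that if $circ(d')$ and $circ(d'')$ do not cross in $B_{region}$ on one side of $e^i$ then ${\cal P}_{e^{i-}}\cap d'$ and ${\cal P}_{e^{i-}}\cap d''$ (or the analogous sets on the right) are nested, so one of the two disks is redundant in that side's optimum. So you have chosen a genuinely different, and more ambitious, route to the same contradiction.

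However, the Jordan-curve step has a real gap. Two intersecting circles always intersect twice in the plane, so merely producing ``a second intersection of $circ(d')$ and $circ(d'')$'' proves nothing; the entire content of the weak hypothesis is that only one of the two intersection points lies in $B_{region}$. Your argument localizes the second intersection inside $B_{region}$ only if the separating curve --- the arc of $circ(d')$ between its two $e^i$-crossings together with the segment of $e^i$ between them --- itself lies inside $B_{region}$, and you have not established this. It is not automatic: the disks of $D^{i-}$ and $D^{i+}$ are arbitrary members of ${\cal D}$ (they may be centered in $\ell^-$), both $e^i$-crossings of $circ(d')$ need not lie in $B_{region}$ at all, and an arc joining them can leave $B_{region}$ even when its endpoints are inside. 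Relatedly, the claim that $e^i\cap B_{region}$ is a single vertical interval is false as stated, since $B_{region}$ is a union of full disks and a disk of $D_\ell$ whose chord on $e^i$ stays above $\ell$ contributes a separate component; the single-interval property holds only for $e^i\cap B_{region}\cap\ell^-$. Until the separating curve is confined to a connected piece of $B_{region}$ containing the witnesses, the interleaving along $e^i$ does not contradict weakness, so your hardest step remains open. (For comparison, the paper argues directly that the unique intersection lies on one side of $e^i$, hence on the other side the covered point sets are nested and one disk is redundant there --- a shorter route, though it too asserts rather than proves the underlying nesting claim.)
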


\begin{proof}
On contrary, assume $d', d'' \in D^{i-}$ and $d', d'' \in D^{i+}$. By the definition of week cover pair,
$circ(d')$ and $circ(d'')$ does not intersect either in $e^{i-}$ or $e^{i+}$.
Therefore, either (i) ${\cal P}_{e^{i-}} \cap d' \subset {\cal P}_{e^{i-}} \cap d''$ or
${\cal P}_{e^{i-}} \cap d'' \subset {\cal P}_{e^{i-}} \cap d'$ or (ii) ${\cal P}_{e^{i+}} \cap d'
\subset {\cal P}_{e^{i+}} \cap d''$ or ${\cal P}_{e^{i+}} \cap d'' \subset {\cal P}_{e^{i+}} \cap d'$,
where ${\cal P}_{e^{i-}}$ (resp. ${\cal P}_{e^{i+}}$) is the set of points in ${\cal P}$ to the left
(resp. right) of $e^i$ (see Fig. \ref{figure-2}). Thus, both the disks $d', d''$ can not be in
$D^{i-}$ and $D^{i+}$.  \hfill $\Box$
\end{proof}

\begin{figure}[h]
\centering
\includegraphics[width=3in]{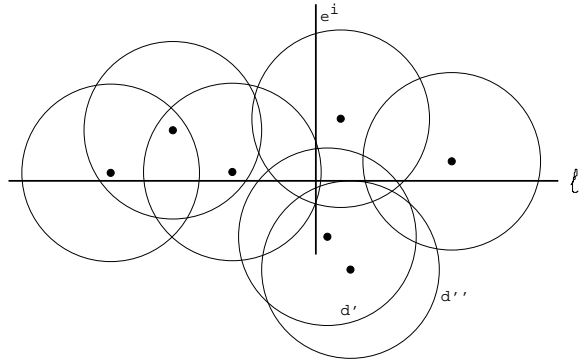}
\caption{Proof of Lemma \ref{lemma-1}}
\label{figure-2}
\end{figure}

\begin{lemma} \label{lemma-2}
For a strong cover pair $(d', d'')$, if $d', d'' \in D^{i-}$ and $d', d'' \in D^{i+}$, then one
intersection between $circ(d')$ and $circ(d'')$ lies in $e^{i-}$ and other intersection lies in $e^{i+}$.
\end{lemma}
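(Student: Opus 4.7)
The plan is to argue by contradiction, in the same spirit as Lemma~\ref{lemma-1}. Suppose the two intersection points of $circ(d')$ and $circ(d'')$ lie on the same side of the vertical line $e^i$; without loss of generality, assume both lie in $e^{i-}$. I will show this forces one of $d',d''$ to be redundant in the optimum cover $D^{i+}$, contradicting its optimality. The symmetric case (both intersections in $e^{i+}$) will give redundancy in $D^{i-}$ by the same reasoning, finishing the proof.

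First, I would invoke Observation~\ref{observation-1}: since $d',d'' \in D^{i-}$ and $d',d'' \in D^{i+}$, each of these disks crosses the line $e^i$. Hence both $d' \cap e^{i+}$ and $d'' \cap e^{i+}$ are nonempty regions, each bounded by a sub-arc of the corresponding unit circle together with a chord lying on $e^i$. Because we have assumed both crossings of $circ(d')$ and $circ(d'')$ lie in $e^{i-}$, the two arcs $circ(d') \cap e^{i+}$ and $circ(d'') \cap e^{i+}$ are disjoint in $e^{i+}$.

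Next, I would argue that this disjointness, combined with the fact that the circles do intersect (the pair is a strong cover pair, so $d' \cap d''$ is a proper lens-shaped region), forces a containment on the $e^{i+}$ side: either $d' \cap e^{i+} \subseteq d'' \cap e^{i+}$ or $d'' \cap e^{i+} \subseteq d' \cap e^{i+}$. Granted this, every point of ${\cal P} \cap e^{i+}$ covered by the inner disk is also covered by the outer disk, so the inner disk can be deleted from $D^{i+}$ without losing coverage of ${\cal P} \cap e^{i+}$, contradicting the optimality of $D^{i+}$.

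The main obstacle I expect is precisely the geometric step in the previous paragraph: ruling out "side-by-side" configurations in $e^{i+}$ in which neither disk's $e^{i+}$-portion contains the other's. Here I would use that the common chord of the two unit circles passes through both intersection points (which lie in $e^{i-}$), so in $e^{i+}$ the two disks are on the same side of this chord; together with the nondisjointness of $d'$ and $d''$ this pins down a nested rather than interlocked configuration in $e^{i+}$. A small figure-based case check on where the two chords $d'\cap e^i$ and $d''\cap e^i$ sit on $e^i$ should seal the containment.
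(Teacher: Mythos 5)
Your proposal follows essentially the same route as the paper: argue by contradiction that if both intersection points of $circ(d')$ and $circ(d'')$ fall on one side of $e^i$, then on the other side one of the two disks is redundant, contradicting the optimality of $D^{i-}$ or $D^{i+}$. This is exactly the paper's reduction to Observation \ref{observation-1} together with the containment argument from the proof of Lemma \ref{lemma-1}.

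The one place where real work remains is the step you yourself single out as the main obstacle, and your proposed fix does not close it. The claim that disjointness of the arcs $circ(d')\cap e^{i+}$ and $circ(d'')\cap e^{i+}$ forces $d'\cap e^{i+}\subseteq d''\cap e^{i+}$ or the reverse is false as a statement about regions: the two caps can be disjoint and sit side by side along $e^i$. For instance, take $e^i$ to be the line $x=0$ and unit disks centered at $(-0.9,\,0.5)$ and $(-0.9,\,-0.5)$; the circles meet at $(-0.9\pm\sqrt{3}/2,\ 0)$, both points in $e^{i-}$, yet $d'\cap e^{i+}$ lies in $y>0.06$ while $d''\cap e^{i+}$ lies in $y<-0.06$, so the caps are disjoint, neither contains the other, and neither disk is obviously redundant on the $e^{i+}$ side. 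Your common-chord remark does not exclude this: both caps lying on the same side of the common chord is perfectly compatible with their being separated along $e^i$ rather than nested. To dispose of this configuration one has to invoke the LSDUDC structure that the argument implicitly leans on --- the relevant points lie in $\ell^-\cap B_{region}$, the two intersection points of a strong cover pair are required to lie in $B_{region}$, and $e^i$ is not an arbitrary vertical line but passes through $p^i_r$ --- none of which appears in your sketch. In fairness, the paper's own proof is equally terse here (it simply cites the proof of Lemma \ref{lemma-1}, which asserts the analogous point-set containment without addressing this case), so you are at parity with the source; but the ``small figure-based case check'' you defer is where the actual content of the lemma lives, and as described it would not go through.
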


\begin{proof}
If both the intersections between $circ(d')$ and $circ(d'')$ lies either in $e^{i-}$ or $e^{i+}$, then
from Observation \ref{observation-1} and the proof of Lemma \ref{lemma-1} $d', d'' \in D^{i-}$ and
$d', d'' \in D^{i+}$ can not happen simultaneously, which is a contradiction. Thus, the lemma follows.
\hfill $\Box$
\end{proof}

\begin{lemma}\label{lemma-3}
For a strong cover pair $(d', d'')$ with $\alpha(d')$ is above $\alpha(d'')$, if the intersection of
$circ(d')$ and $circ(d'')$ lies within a lower boundary disk $d$ and both $\alpha(d')$ and $\alpha(d'')$ lies
either right of $\alpha(d)$ or left of $\alpha(d)$, then one intersection of $circ(d)$ and
$circ(d')$ happen above the horizontal line $\ell$.
\end{lemma}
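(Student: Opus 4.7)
My plan is to set coordinates with $\ell$ as the $x$-axis and write $\alpha(d)=(A,B)$, $\alpha(d')=(C,E)$, $\alpha(d'')=(F,G)$. Since $d\in D_\ell\subseteq{\cal U}$ we have $B\ge 0$; by hypothesis $E>G$; and by the same-side assumption we may assume (after a reflection) $C>A$. A routine perpendicular-bisector computation places the two intersections of $circ(d)$ and $circ(d')$ at $y$-coordinates $(B+E)/2\pm(C-A)\sqrt{1-D_1^2/4}/D_1$, where $D_1=|\alpha(d)-\alpha(d')|$, so the upper one has $y$-coordinate
\[ y_+ \;=\; \tfrac{B+E}{2}+\tfrac{(C-A)\sqrt{1-D_1^2/4}}{D_1}, \]
and the goal is $y_+>0$. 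The easy case $E\ge 0$ is immediate because both summands are nonnegative and the second is strictly positive (as $C>A$ and $D_1<2$, the latter following from the fact that both intersections of $circ(d')$ and $circ(d'')$ lie in $d$, so in particular $d\cap d'\neq\emptyset$).

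For the remaining case $E<0$, I would reformulate everything in terms of angles on $circ(d')$. Parametrizing $circ(d')$ by the angle $\theta$ around its center, write $\phi_d=\angle(\alpha(d)-\alpha(d'))$, $\phi_{d''}=\angle(\alpha(d'')-\alpha(d'))$, $\psi_d=\arccos(D_1/2)$, and $\psi_{d''}=\arccos(D_2/2)$ with $D_2=|\alpha(d')-\alpha(d'')|$. Two facts: the arc of $circ(d')$ contained in the unit disk $d$ is exactly $[\phi_d-\psi_d,\phi_d+\psi_d]$, with endpoints being the two intersections of $circ(d)$ and $circ(d')$; similarly the arc inside $d''$ is $[\phi_{d''}-\psi_{d''},\phi_{d''}+\psi_{d''}]$, with endpoints the two intersections of $circ(d')$ and $circ(d'')$. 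In our coordinates the upper intersection of $circ(d)$ and $circ(d')$ is at $\theta=\phi_d-\psi_d$, so $y_+=E+\sin(\phi_d-\psi_d)$. The hypothesis $P_1,P_2\in d$ becomes the angular containment $[\phi_{d''}-\psi_{d''},\phi_{d''}+\psi_{d''}]\subseteq[\phi_d-\psi_d,\phi_d+\psi_d]$, while the hypothesis that $\alpha(d')$ is above $\alpha(d'')$ forces $\phi_{d''}\in(\pi,2\pi)$.

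Combining these, we get $\phi_d+\psi_d>\pi$. In this subcase we also have $\phi_d\in(\pi/2,\pi)$, since $\cos\phi_d=(A-C)/D_1<0$ and $\sin\phi_d=(B-E)/D_1>0$, and $\psi_d\in(0,\pi/2)$, so $\phi_d+\psi_d\in(\pi,3\pi/2)$ and $\sin(\phi_d+\psi_d)<0$. The constraint $B\ge 0$ reads $E+D_1\sin\phi_d\ge 0$, i.e., using $D_1=2\cos\psi_d$ and the product-to-sum identity, $\sin(\phi_d+\psi_d)+\sin(\phi_d-\psi_d)\ge -E$. Since $\sin(\phi_d+\psi_d)<0$, this yields $\sin(\phi_d-\psi_d)>-E$, i.e., $y_+>0$, as needed. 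The only mildly delicate step is the correct identification of the $d$-arc and $d''$-arc on $circ(d')$ and the translation of ``$P_1,P_2\in d$'' into arc inclusion; once that bookkeeping is in place the finish is a two-line trigonometric computation.
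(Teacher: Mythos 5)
Your proof is correct for the lemma as literally stated, but it is a genuinely different argument from the paper's. The paper reasons synthetically: the two centers $\alpha(d),\alpha(d')$ together with the two intersection points $a,b$ of $circ(d)$ and $circ(d')$ form a rhombus (all four sides are unit radii), so $\alpha(d')-a=b-\alpha(d)$; granting that $\alpha(d')$ lies above one of $a,b$, the other must then lie above $\alpha(d)\in\ell^+$ and hence above $\ell$. The entire difficulty is thereby shifted onto the assertion that ``$\alpha(d')$ should lie above at least one point among $a$ or $b$,'' which the paper justifies only by restating the hypotheses; your computation is exactly the missing verification, carried out in coordinates instead. Your case analysis is sound: the case $E\ge 0$ is immediate, and in the case $E<0$ the chain $\phi_d\in(\pi/2,\pi)$, $\psi_d\in(0,\pi/2)$, $\phi_d+\psi_d>\pi$, hence $\sin(\phi_d+\psi_d)<0$, combined with $B=E+\sin(\phi_d+\psi_d)+\sin(\phi_d-\psi_d)\ge 0$, forces $y_+=E+\sin(\phi_d-\psi_d)>0$. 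The step you call delicate does check out: if $\phi_d+\psi_d\le\pi$ then the $d$-arc of $circ(d')$ lies entirely at angles in $(0,\pi]$, whereas with $\phi_{d''}\in(\pi,2\pi)$ and $\psi_{d''}<\pi/2$ at most one of $\phi_{d''}\pm\psi_{d''}$ can fall in $(0,\pi]$ modulo $2\pi$, so both intersection points of $circ(d')$ and $circ(d'')$ cannot lie in $d$.

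The one point to flag is your reading of the hypothesis. You interpret ``the intersection of $circ(d')$ and $circ(d'')$ lies within $d$'' as $circ(d')\cap circ(d'')\subseteq d$, i.e.\ both intersection points lie in $d$, and your derivation of $\phi_d+\psi_d>\pi$ genuinely needs both. That is the literal reading, and under it your proof is complete. However, when the paper invokes this lemma inside the proof of Lemma \ref{lemma-4}, it only knows that one intersection point (the one below $\ell$) lies in the particular lower boundary disk $d$; by Lemma \ref{lemma-2} the other intersection lies on the opposite side of $e^i$ and is only guaranteed to be in $B_{region}$, possibly inside a different lower boundary disk. Under that weaker hypothesis your argument does not close, since a single endpoint $\phi_{d''}-\psi_{d''}$ can perfectly well lie in $(0,\pi]$ while $\phi_{d''}+\psi_{d''}$ does not. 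This is arguably a defect of the paper's statement rather than of your proof, but you should be aware that what you have proved is the both-points version, which may be stronger than the hypothesis actually available where the lemma is used.
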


\begin{proof}
With out loss of generality assume that both $\alpha(d')$ and $\alpha(d'')$ lie to the right of $\alpha(d)$.
Let $a$ and $b$ be the two intersection points of $circ(d)$ and $circ(d')$. Since $\alpha(d')$ is above $\alpha(d'')$,
the intersection of $circ(d')$ and $circ(d'')$ lies within $d$ and both $\alpha(d')$ and $\alpha(d'')$ lie to the right
of $\alpha(d)$, $\alpha(d')$ should lies above at least one point among $a$ or $b$. Let $\alpha(d')$ lies above $a$.
By symmetry, $\overline{\alpha(d'),a}$ and $\overline{\alpha(d),b}$ are parallel (see Fig. \ref{figure-3}). Thus,
$b$ must be above $\alpha(d)$ i.e., $b$ must be above $\ell$. \hfill $\Box$
\end{proof}

\begin{figure}[h]
\centering
\includegraphics[width=3in]{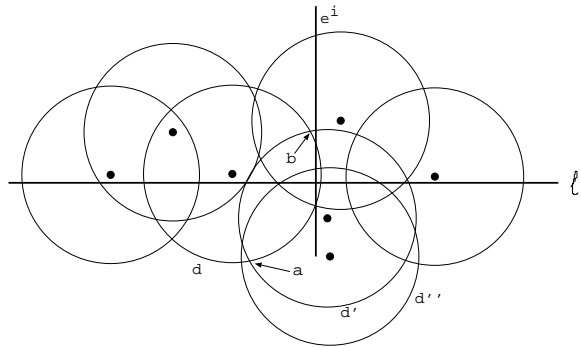}
\caption{Proof of Lemma \ref{lemma-3}}
\label{figure-3}
\end{figure}

\begin{lemma} \label{lemma-4}
$|D^{i-} \cap D^{i+}| \leq 2$.
\end{lemma}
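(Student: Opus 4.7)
The plan is a proof by contradiction. Suppose three disks $d_1, d_2, d_3$ all lie in $D^{i-}\cap D^{i+}$. By Lemma~\ref{lemma-1} every pair $(d_j,d_k)$ must be a strong cover pair; by Lemma~\ref{lemma-2} the two intersection points of $circ(d_j)$ and $circ(d_k)$ lie on opposite sides of the vertical line $e^i$; and by Observation~\ref{observation-1} all six pairwise intersection points sit inside $B_{region}$. Since the two intersections of any pair lie on the perpendicular bisector of the corresponding pair of centers, and this bisector must be crossed by $e^i$, the bisector is not vertical, so the three centers have pairwise distinct $y$-coordinates; WLOG $\alpha(d_1)$ is the highest and $\alpha(d_3)$ is the lowest.

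Next I exploit the optimality of both $D^{i-}$ and $D^{i+}$: each $d_j$ must own a private point $p_j^- \in {\cal P}\cap e^{i-}\cap d_j$ uncovered by the other two disks, and symmetrically a private point $p_j^+ \in {\cal P}\cap e^{i+}\cap d_j$. Because every point of $\cal P$ lies in $\ell^-$ and, by the definition of $D_\ell$, in the lower region of some lower boundary disk, all six private points live in $B_{region}\cap\ell^-$.

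The crux is to apply Lemma~\ref{lemma-3} to the extreme pair $(d_1,d_3)$. Let $a\in e^{i-}$ and $b\in e^{i+}$ be the two circle intersection points; both lie in $B_{region}$, so each is contained in some lower boundary disk. A short case analysis on the horizontal positions of $\alpha(d_1)$ and $\alpha(d_3)$ relative to the lower boundary disks straddling $p^i_r$ produces a lower boundary disk $d$ with both $\alpha(d_1)$ and $\alpha(d_3)$ on the same horizontal side of $\alpha(d)$; Lemma~\ref{lemma-3} then forces one intersection of $circ(d)$ with $circ(d_1)$ above $\ell$, which implies that the lower region of $d$ on the corresponding side of $e^i$ is entirely contained in $d_1\cup d_3$. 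This engulfs the private point of $d_2$ on that side, contradicting its status as a private point. The principal obstacle will be guaranteeing the horizontal alignment required by Lemma~\ref{lemma-3}: depending on the configuration one may have to invoke the lemma on $(d_3,d_1)$ instead, pivot to the intersection $b$ rather than $a$, or in a degenerate sub-case build the contradiction from $(d_1,d_2)$ or $(d_2,d_3)$ rather than from the extreme pair; a clean case split on the horizontal order of the three centers with respect to $p^i_r$ should dispose of all these sub-cases.
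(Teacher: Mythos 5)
Your overall strategy is the same as the paper's: assume three disks $d_1,d_2,d_3\in D^{i-}\cap D^{i+}$, use Observation~\ref{observation-1} and Lemma~\ref{lemma-1} to force every pair to be a strong cover pair, order the centers vertically, bring in a lower boundary disk $d$ containing a pairwise intersection point, invoke Lemma~\ref{lemma-3}, and contradict the optimality of $D^{i-}$ or $D^{i+}$. That skeleton matches the paper.

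The crux, however, is left as an assertion, and as stated it does not go through. First, you need a lower boundary disk $d$ containing the intersection point with both $\alpha(d_1)$ and $\alpha(d_3)$ on the same horizontal side of $\alpha(d)$ --- this is exactly the hypothesis of Lemma~\ref{lemma-3}, and the configuration in which the two centers straddle $\alpha(d)$ is not disposed of by the ``short case analysis'' you defer; it must actually be argued or handled as a separate case. Second, the inference from ``one intersection of $circ(d)$ and $circ(d_1)$ lies above $\ell$'' to ``the lower region of $d$ on that side of $e^i$ is entirely contained in $d_1\cup d_3$'' is unjustified and far from obvious: $d$ is itself a unit disk, and nothing you have established forces its whole lower region on one side of $e^i$ to be swallowed by two other unit disks. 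Third, even granting that containment, your contradiction needs the private point $p_2^{\pm}$ of $d_2$ to lie in the lower region of this particular $d$; a point of ${\cal P}$ is only guaranteed to lie in the lower region of \emph{some} lower boundary disk, not of $d$. The paper closes the argument in a different (and more robust) form: it shows $(d_x\cup d_y\cup d_z)\cap{\cal P}\cap e^{i-}\subseteq(d_x\cup d)\cap{\cal P}\cap e^{i-}$, i.e., the three disks can be replaced by two on one side of $e^i$, which violates optimality of $D^{i-}$ directly without locating any particular private point. To complete your proof you would need either to establish that replacement containment or to supply the three missing geometric facts above.
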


\begin{proof}
On contrary, assume that $d_x, d_y, d_z \in D^{i-} \cap D^{i+}$. Since $d_x, d_y, d_z \in D^{i-}$ as well as
$d_x, d_y, d_z \in D^{i+}$, the disks $d_x, d_y, d_z$ intersect each other in $B_{region}$. If any pair $(d, d') \in 
\Gamma = \{(d_x, d_y), (d_x, d_z), (d_y,d_z)\}$ do not form a week cover pair nor strong cover pair, then either 
$d \cap B_{region} \subseteq d' \cap B_{region}$ or $d \cap B_{region} \supseteq d' \cap B_{region}$, which contradict 
the fact that $d_x, d_y, d_z \in D^{i-} \cap D^{i+}$. Again, from Lemma \ref{lemma-1}, no pair in $\Gamma$ form a week 
cover pair because $d_x, d_y, d_z \in D^{i-}$ as well as $d_x, d_y, d_z \in D^{i+}$. Therefore, each pair in $\Gamma$ 
form a strong cover pair. Without loss of generality assume that $\alpha(d_x)$ is below $\alpha(d_y)$ and $\alpha(d_y)$ 
is below $\alpha(d_z)$ (see Fig. \ref{figure-4}). If $a$ is the intersection between $circ(d_x)$ and $circ(d_y)$ inside 
the lower boundary disk $d$ (say) and below the horizontal line $\ell$, then one intersection between $circ(d_y)$ and 
$circ(d_z)$ lies inside of $d$ (from Lemma \ref{lemma-3}). Therefore, $(d_x \cup d_y \cup d_z)\cap {\cal P} \cap e^{i-} 
\subseteq (d_x \cup d)\cap {\cal P} \cap e^{i-}$, which implies that $D^{i-}$ is not optimum, leading to a contradiction. 
Thus, the lemma follows. \hfill $\Box$
\end{proof}

\begin{figure}[h]
\centering
\includegraphics[width=3in]{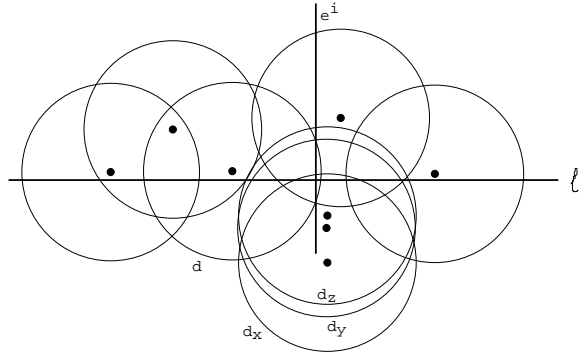}
\caption{Proof of Lemma \ref{lemma-4}}
\label{figure-4}
\end{figure}

The following theorem says that the LSDUDC problem admits a PTAS.

\begin{theorem} \label{theorem-1}
Algorithm \ref{alg:LSDUDC} produce $(1+\frac{2}{k-2})$-factor approximation results in $O(m^kn)$ time.
\end{theorem}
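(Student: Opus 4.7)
The plan is to prove both the approximation factor and the running time bound, with the approximation factor requiring the subtler argument. Let $t$ denote the total number of iterations of the while loop. For the approximation, I would first bound $|{\cal D}^*|\leq\sum_{l=1}^t o_l$, where $o_l:=\mathrm{opt}(\bigcup_{\theta=i_l}^{j_l}{\cal P}_\theta)$ is the local optimum for block $l$ (each iteration can be realized by a minimum cover of its block of size at most $k$). By the maximality of $j_l$ for $l<t$, the optimum cover of $\bigcup_{\theta=i_l}^{j_l+1}{\cal P}_\theta$ needs at least $k+1$ disks; since the single lower boundary disk $d^{j_l+1}$ covers the extra strip ${\cal P}_{j_l+1}$, this forces $o_l\geq k$, and combined with $o_l\leq k$ (the algorithm's budget), we obtain $o_l=k$ for all $l<t$.

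A second key fact, obtained from Lemma \ref{lemma-4} applied at each split line $e^{j_l}$ (restricted to the two-block window $[i_l,j_{l+1}]$), is that the two consecutive block-local optima share at most $2$ disks. I would then run a charging argument to establish $\sum_l o_l\leq|OPT|+2(t-1)$: each $OPT$ disk is charged once per block it contributes to, and the total charge beyond $|OPT|$, which equals the aggregate ``excess block span'' (equivalently, the total number of $OPT$ disks crossing the $t-1$ split lines), is at most $2(t-1)$ by (the analogue of) Lemma \ref{lemma-4}. Substituting $o_l=k$ for $l<t$ gives $(t-1)k+o_t\leq|OPT|+2(t-1)$, and since $|{\cal D}^*|\leq(t-1)k+o_t$, a short algebraic manipulation yields $|{\cal D}^*|/|OPT|\leq k/(k-2)=1+\frac{2}{k-2}$.

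For the running time, the dominant cost per iteration is finding the largest $j$ such that $\bigcup_{\theta=i}^{j}{\cal P}_\theta$ is covered by some $k$-subset of ${\cal D}$. I would enumerate all $\binom{m}{k}=O(m^k)$ size-$k$ subsets and precompute for each, in $O(n)$ time, the set of indices $\theta$ for which it fully covers ${\cal P}_\theta$; finding the maximum $j$ across all iterations then reduces to a scan through the precomputed data, giving $O(m^kn)$ total time.

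The main obstacle is the charging step, specifically translating Lemma \ref{lemma-4}'s bound on the intersection of opposing optima into a bound of $2$ on the number of $OPT$ disks crossing each split $e^{j_l}$. Lemma \ref{lemma-4} directly controls only the minimum covers $D^{j_l-}$ and $D^{j_l+}$, whereas the charging requires control on arbitrary disks in a fixed $OPT$; extending the strong-cover-pair argument of Lemmas \ref{lemma-1}--\ref{lemma-3}, together with the geometric constraint that a unit disk spanning both sides of $e^{j_l}$ and essentially contributing on each side must form strong cover pairs with its companions, is the technical heart of the proof. Once this per-split bound is in place, the remaining algebra is routine.
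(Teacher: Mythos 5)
Your proposal follows essentially the same route as the paper: decompose the points into the blocks ${\cal Q}_1,\ldots,{\cal Q}_t$ produced by the greedy while-loop, use maximality of $j$ (plus the fact that the lower boundary disk covers its own strip) to argue each block's local optimum is $k$, invoke Lemma \ref{lemma-4} to bound the overlap across each split line $e^{j_l}$ by $2$, and conclude the optimum needs at least $(k-2)(t-1)+|{\cal D}^t|$ disks while the algorithm uses $k(t-1)+|{\cal D}^t|$, with the $O(m^k n)$ running time obtained by the same enumeration of $k$-subsets. The ``obstacle'' you flag --- that Lemma \ref{lemma-4} controls only the two block-local optima $D^{i-},D^{i+}$ rather than the disks of a fixed global optimum crossing a split line --- is a genuine subtlety, but the paper's own proof does not resolve it either: it simply asserts the lower bound on the optimum directly from Lemma \ref{lemma-4}.
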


\begin{proof}
For some integer $t$, let $j_1, j_2, \ldots, j_t$ the values of $j$ in the while loop (line number 9) of
the Algorithm \ref{alg:LSDUDC}. Let ${\cal Q}_{v} =
\cup_{i=j_{v-1}+1, j_{v-1}+2, \ldots, j_v}{\cal P}_i$ for $v = 1, 2, \ldots, t$, where $j_0 = 0$.
Algorithm \ref{alg:LSDUDC} finds a covering for the sets $\{{\cal Q}_1, {\cal Q}_2, \ldots, {\cal Q}_t\}$
independently with each of size $k$ (optimum size because in each iteration of the while loop in line
number 9, Algorithm \ref{alg:LSDUDC} finds maximum value of $j$'s) except the covering of ${\cal Q}_t$. Let
${\cal D}^1, {\cal D}^2, \ldots, {\cal D}^t$ be the covering for ${\cal Q}_1, {\cal Q}_2, \ldots, {\cal Q}_t$
respectively. Lemma \ref{lemma-4} says that ${\cal D}^i \cap {\cal D}^{i+1} \leq 2$. Therefore, the total
number of disks required to cover all the points by Algorithm \ref{alg:LSDUDC} is $k(t-1) + |{\cal D}^t|$
whereas at least $(k-2)(t-1) + |{\cal D}^t|$ disks required in the optimum solution. Thus, the approximation
factor of the Algorithm \ref{alg:LSDUDC} is $(1+\frac{2}{k-2})$.

The execution time to find lower boundary disks and arrange them from left to right (line number 4) is $O(m \log m)$,
where $m = |{\cal D}|$. To compute ${\cal P}_i$ for $i = 1, 2, \ldots s$ (for loop in line number 5) $O(n \log n)$ time
is required. The time complexity of the while loop (line number 9) is $O(m^kn)$. Thus, the total time complexity of
the Algorithm \ref{alg:LSDUDC} is $O(m^kn)$. \hfill $\Box$
\end{proof}

\subsection{An $(9+\epsilon)$-factor Approximation Algorithm for DUDC Problem}
In this section, we wish to describe an $(9+\epsilon)$-factor approximation algorithm for DUDC Problem,
where a set ${\cal P}$ of $n$ points and a set ${\cal D}$ of $m$ unit disks are distributed in the Euclidean
plane and objective is to choose minimum cardinality set ${\cal D}^* (\subseteq {\cal D})$ such that union of
the disks in ${\cal D}^*$ covers ${\cal P}$. From Theorem \ref{theorem-1}, LSDUDC problem has an
$(1+\mu)$-factor approximation algorithm ($\mu = \frac{2}{k-2}$) and the running time of the
algorithm is $O(m^{2(1+ \frac{1}{\mu})}n)$. Das et al. \cite{DFLN11} proved that any instance of DUDC
problem can be partitioned into several instances of LSDUDC and WSDUDC (with strip width $1/\sqrt{2}$) problems.
They also proved that the approximation factor of the DUDC problem is

$6 \times$ (approximation factor of LSDUDC problem) $+$ \\
approximation factor of WSDUDC (width $\geq 1/\sqrt{2}$) problem.

Fraser and L\'{o}pez-Ortiz \cite{FL12} proposed an 3-factor approximation algorithm for WSDUDC (with width $h \leq 4/5$)
problem. Therefore, we have the following theorem for DUDC problem.

\begin{theorem}\label{theorem-2}
The DUDC problem admits $(9+\epsilon)$-factor approximation result in $O(\max(m^6n, m^{2(1+ \frac{6}{\epsilon})}n)$
time.
\end{theorem}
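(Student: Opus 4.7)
The plan is to assemble Theorem \ref{theorem-2} directly from the three ingredients already on hand: the PTAS for LSDUDC (Theorem \ref{theorem-1}), the partitioning lemma of Das et al.\ \cite{DFLN11}, and the WSDUDC result of Fraser and L\'opez-Ortiz \cite{FL12}. The partitioning lemma splits any DUDC instance into subinstances of LSDUDC type and of WSDUDC type (with strip width $1/\sqrt{2}$), and guarantees that the resulting approximation factor for DUDC is $6 \alpha_{\mathrm{LSDUDC}} + \alpha_{\mathrm{WSDUDC}}$, where $\alpha_{\mathrm{LSDUDC}}$ and $\alpha_{\mathrm{WSDUDC}}$ are the approximation factors attained on the two subproblems. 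Because $1/\sqrt{2} \leq 4/5$, the WSDUDC instances produced are of a type for which \cite{FL12} gives a $3$-factor algorithm running in $O(m^6 n)$ time, so we may fix $\alpha_{\mathrm{WSDUDC}} = 3$.

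The main parameter choice is therefore how to invoke Theorem \ref{theorem-1}. I would set $\mu = \epsilon/6$ and apply Algorithm \ref{alg:LSDUDC} with the corresponding integer $k = \lceil 2 + 2/\mu \rceil = \lceil 2 + 12/\epsilon \rceil$, which yields an LSDUDC approximation factor of $1 + \mu = 1 + \epsilon/6$ by Theorem \ref{theorem-1}. Substituting into the partitioning formula gives
\[
6 \left( 1 + \tfrac{\epsilon}{6}\right) + 3 \;=\; 9 + \epsilon,
\]
as required. The constraint $0 < \epsilon \leq 6$ corresponds precisely to $0 < \mu \leq 1$, matching the range for which Theorem \ref{theorem-1} was established.

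For the running time, the LSDUDC subroutine costs $O(m^{k} n) = O(m^{2(1 + 1/\mu)} n) = O(m^{2(1 + 6/\epsilon)} n)$ per invocation. This is run on each LSDUDC subinstance produced by the partition, and since the partition yields only $O(1)$ such subinstances and similarly $O(1)$ WSDUDC subinstances (the width-$1/\sqrt{2}$ strips), the total running time is dominated by the maximum of the two subroutine costs, namely $O(\max(m^{6} n,\; m^{2(1 + 6/\epsilon)} n))$.

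The only step that requires any care is verifying that the parameter range matches: one must check that $\mu = \epsilon/6 \in (0,1]$ is exactly the hypothesis of Theorem \ref{theorem-1}, and that the partition of \cite{DFLN11} applied with strip width $1/\sqrt{2}$ really does feed WSDUDC instances of width $\leq 4/5$ into the $3$-factor algorithm of \cite{FL12}; both are immediate from the cited statements, so no new combinatorial work is needed and the theorem follows.
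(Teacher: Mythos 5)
Your proposal is correct and follows essentially the same route as the paper: both invoke the decomposition of Das et al.\ giving the factor $6\alpha_{\mathrm{LSDUDC}} + \alpha_{\mathrm{WSDUDC}}$, set $\mu = \epsilon/6$ in Theorem \ref{theorem-1} to obtain $6(1+\epsilon/6)+3 = 9+\epsilon$, and bound the running time by the maximum of the $O(m^6n)$ WSDUDC cost and the $O(m^{2(1+6/\epsilon)}n)$ LSDUDC cost. Your explicit check that the strip width $1/\sqrt{2} \leq 4/5$ justifies the $3$-factor WSDUDC subroutine is a detail the paper leaves implicit, but it is the same argument.
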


\begin{proof}
The approximation factor of the DUDC problem is $(6 \times$ (approximation
factor of LSDUDC problem) $+$ approximation factor of WSDUDC (width $\geq 1/\sqrt{2}$) problem) \cite{DFLN11}. Therefore,
the approximation factor for the DUDC problem is $6 \times (1+\mu) + 3 = 9 + \epsilon$, where $\epsilon = 6 \mu$.
The time complexity result follows from (i) time complexity of WSDUDC is $O(m^6n)$ \cite{FL12} and (ii) above
discussion. \hfill $\Box$
\end{proof}

\begin{corollary}
The DUDC problem has an 12-factor approximation result with time complexity equal to the 3-factor approximation
algorithm for WSDUDC problem.
\end{corollary}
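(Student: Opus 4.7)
The plan is to obtain the corollary by instantiating Theorem \ref{theorem-2} at the largest value of $\mu$ (equivalently, smallest $k$) that still keeps the LSDUDC running time bounded by that of the $3$-factor WSDUDC algorithm, namely $O(m^6n)$. Since Theorem \ref{theorem-2} gives overall time $O(\max(m^6n,\, m^{2(1+6/\epsilon)}n))$, the LSDUDC term dominates exactly when $2(1+6/\epsilon) > 6$, i.e., $\epsilon < 3$. So the threshold value is $\epsilon = 3$.

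First I would translate this threshold back to the parameter $k$ of Algorithm \ref{alg:LSDUDC}. Recall $\epsilon = 6\mu$ with $\mu = \frac{2}{k-2}$, so $\epsilon = 3$ corresponds to $\mu = \tfrac{1}{2}$ and $k = 6$. By Theorem \ref{theorem-1}, with $k = 6$ Algorithm \ref{alg:LSDUDC} yields a $(1+\tfrac{2}{k-2}) = \tfrac{3}{2}$-factor approximation for LSDUDC in $O(m^6 n)$ time.

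Next I would plug this into the reduction of \cite{DFLN11}, which bounds the DUDC approximation factor by $6 \cdot (\text{LSDUDC factor}) + (\text{WSDUDC factor for width} \geq 1/\sqrt{2})$. Using the $3$-factor WSDUDC algorithm of \cite{FL12}, this yields $6 \cdot \tfrac{3}{2} + 3 = 12$. The two subroutines run in $O(m^6 n)$ time each, and the partitioning overhead from \cite{DFLN11} does not exceed this, so the overall running time matches that of the $3$-factor WSDUDC algorithm, as claimed.

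I don't foresee a real obstacle here; the corollary is a direct specialization of Theorem \ref{theorem-2}. The only mild care needed is to verify that the LSDUDC running time genuinely caps out at $O(m^6n)$ when $\mu = 1/2$ (it does, since $2(1+1/\mu) = 6$), and that no hidden cost in the \cite{DFLN11} partitioning step exceeds this bound.
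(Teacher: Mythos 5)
Your proposal is correct and follows essentially the same route as the paper, which simply sets $\epsilon = 3$ in Theorem \ref{theorem-2} so that the approximation factor becomes $12$ and the running time becomes $O(m^6n)$, matching the WSDUDC subroutine. Your additional unwinding to $\mu = 1/2$, $k = 6$ is just a more explicit verification of the same instantiation.
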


\begin{proof}
The time complexity of 3-factor approximation algorithm for the WSDUDC problem is $O(m^6n)$ \cite{FL12}. If we
set $\epsilon = 3$, then approximation factor of the DUDC problem is 12 (see Theorem \ref{theorem-2}) and
the running time of the algorithm is $O(m^6n)$ (see Theorem \ref{theorem-2}). Thus, the corollary follows.
\hfill $\Box$
\end{proof}

\section{Approximation Algorithms for RRC Problem}
In the RRC problem, the inputs are (i) a set ${\cal D}$ of $m$ unit disks and (ii) a rectangular region ${\cal R}$
and the objective is to choose minimum cardinality set ${\cal D}^{**} (\subseteq {\cal D})$ such that each point
in ${\cal R}$ is covered by at least one disk in ${\cal D}^{**}$. In this problem, we assume that ${\cal R}$ is
covered by the union of the disks in ${\cal D}$, otherwise the RRC problem has no feasible solution. A {\it sector}
$f$ inside ${\cal R}$ is a maximal region inside ${\cal R}$ formed by the intersection of a set of disks such that
each point within $f$ is covered by the same set of disks. Let ${\cal F}$ be the set of all sectors (inside ${\cal R}$)
formed by ${\cal D}$. Therefore, the size of ${\cal F}$ is at most $O(m^2)$. Now we construct a set of points ${\cal T}$
as follows: for each sector $f \in {\cal F}$ we add one arbitrary point $p \in f$ to ${\cal T}$. Thus, we have the 
following theorem:

\begin{theorem}\label{theorem-3}
The RRC problem has $(9+\epsilon)$-factor approximation algorithm with running time
$O(\max(m^8, m^{4(1+3/\epsilon)})$.
\end{theorem}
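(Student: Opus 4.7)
The plan is to reduce the RRC problem to the DUDC problem on the point set $\mathcal{T}$ already constructed before the theorem statement, and then invoke Theorem \ref{theorem-2}. The first thing I would establish is the key correctness claim: a subset $\mathcal{D}' \subseteq \mathcal{D}$ covers every point of $\mathcal{R}$ if and only if $\mathcal{D}'$ covers every point of $\mathcal{T}$. The ``only if'' direction is immediate since $\mathcal{T} \subseteq \mathcal{R}$. For the ``if'' direction, observe that by the definition of a sector, every point in a sector $f \in \mathcal{F}$ is covered by exactly the same subset of $\mathcal{D}$; therefore, if the representative point of $f$ picked in $\mathcal{T}$ is covered by some $d \in \mathcal{D}'$, then $d$ already covers every point of $f$. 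Since the sectors partition $\mathcal{R}$ (up to measure-zero boundaries, which are also handled by the covering disks of the incident sectors), covering $\mathcal{T}$ covers $\mathcal{R}$.

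Next I would show that the reduction preserves the optimum exactly: the optimum DUDC solution on $(\mathcal{T}, \mathcal{D})$ has the same cardinality as the optimum RRC solution on $(\mathcal{R}, \mathcal{D})$. Any feasible RRC solution covers $\mathcal{T}$ and is thus feasible for DUDC, and by the equivalence above, any feasible DUDC solution on $\mathcal{T}$ is feasible for RRC. So running the $(9+\epsilon)$-factor DUDC algorithm from Theorem \ref{theorem-2} on $(\mathcal{T}, \mathcal{D})$ and returning its output produces an RRC solution whose size is at most $(9+\epsilon)$ times the RRC optimum.

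Finally, I would bound the running time. The sectors can be enumerated in $O(m^2)$ time by the standard arrangement of $m$ unit circles, so $|\mathcal{T}| = n = O(m^2)$. Plugging this $n$ into the running time of Theorem \ref{theorem-2}, namely $O(\max(m^6 n, m^{2(1+6/\epsilon)} n))$, gives $O(\max(m^8, m^{2(1+6/\epsilon)+2})) = O(\max(m^8, m^{4(1+3/\epsilon)}))$, which matches the bound in the statement. The only step that requires any care is arguing that picking \emph{one} representative per sector is enough; the rest is essentially a bookkeeping exercise combining the sector structure with Theorem \ref{theorem-2}, so I do not expect any serious obstacle.
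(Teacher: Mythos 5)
Your proposal is correct and follows essentially the same route as the paper: reduce the RRC instance $({\cal R},{\cal D})$ to the DUDC instance $({\cal T},{\cal D})$ using the fact that a disk covering one point of a sector covers the whole sector, then apply Theorem \ref{theorem-2} with $n = O(m^2)$. Your write-up is in fact slightly more careful than the paper's (explicitly arguing both directions of the equivalence and verifying the exponent arithmetic), but the underlying argument is identical.
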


\begin{proof}
Consider an arbitrary point $p \in {\cal T}$. Let $f \in {\cal F}$ be the sector in which the point $p$ lies. From the
definition of sector, if a disk $d \in {\cal D}$ covers $p$, then the disk $d$ also covers the whole sector $f$. Therefore,
the instance $({\cal R}, {\cal D})$ of the RRC problem is exactly same as the instance $({\cal T}, {\cal D})$ of the
DUDC problem. Note that ${\cal T} = O(m^2)$. Thus, the theorem follows from Theorem \ref{theorem-2} by putting $n = m^2$.
\hfill $\Box$
\end{proof}

\subsection{RRC Problem in Reduce Radius Setup}
In this subsection we consider the RRC problem in reduce radius setup. For a given set ${\cal D}$ of unit disks and
a rectangular region ${\cal R}$ such that ${\cal R}$ is covered by the disks in ${\cal D}$ after reducing their
radius to $(1-\delta)$, the objective is to choose minimum cardinality set ${\cal D}^{**} (\subseteq {\cal D})$ whose
union covers ${\cal R}$. In the reduce radius setup an algorithm ${\cal A}$ is said to be $\beta$-factor approximation if
$\frac{|{\cal A}_{out}|}{|opt|} \leq \beta$, where ${\cal A}_{out}$ is the output of ${\cal A}$ and $opt$ is the
optimum set of disks with reduced radius covering the region of interest. The reduce radius setup has many applications 
in wireless sensor networks, where coverage remains stable under small perturbations of sensing ranges and their positions. 
Here we propose an 2.25-factor approximation algorithm for this problem. The best known approximation factor for the same 
problem was 4 \cite{FKKLS07}.

\begin{observation} \label{observation-2}
Let $\delta = \nu/\sqrt{2}$ and $d$ be an unit disk centered at a point $p$. If $d'$ is a disk of radius
$(1-\delta)$ centered within a square of size $\nu \times \nu$ centered at $p$, then $d' \subseteq d$.
\end{observation}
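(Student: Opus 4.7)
The plan is to reduce the claim to a single application of the triangle inequality, after first bounding how far the center of $d'$ can be from $p$. Let $c'$ denote the center of $d'$. By hypothesis $c'$ lies inside the axis-aligned square $S$ of side length $\nu$ centered at $p$, so the maximum possible Euclidean distance $|pc'|$ is attained at a corner of $S$ and equals half the diagonal, namely $\frac{\nu\sqrt{2}}{2} = \frac{\nu}{\sqrt{2}}$. Substituting the choice $\delta = \nu/\sqrt{2}$ gives $|pc'| \leq \delta$.

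Next I would take an arbitrary point $q \in d'$ and argue that $q \in d$. Since $d'$ is a disk of radius $(1-\delta)$ centered at $c'$, we have $|c'q| \leq 1-\delta$. The triangle inequality then yields
\[
|pq| \;\leq\; |pc'| + |c'q| \;\leq\; \delta + (1-\delta) \;=\; 1,
\]
so $q$ lies in the unit disk $d$ centered at $p$. Because $q$ was arbitrary, $d' \subseteq d$, which is exactly the claim.

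There is no real obstacle here; the only subtlety is being careful that the distance from the center of a $\nu \times \nu$ square to an arbitrary point inside it is bounded by $\nu/\sqrt{2}$ (achieved at the corners), and that the choice of $\delta$ is calibrated precisely so that this worst-case offset of $c'$ from $p$ is absorbed by the radius deficit $\delta$. Hence the proof is a one-line triangle inequality after a geometric remark about the square.
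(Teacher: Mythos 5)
Your proof is correct and matches the paper's argument, which likewise rests on the observation that any point of the $\nu\times\nu$ square is within $\nu/\sqrt{2}=\delta$ of its center $p$; you simply make explicit the triangle-inequality step that the paper leaves implicit.
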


\begin{proof}
The observation follows from the fact that the maximum distance of any point within the square
of size $\nu \times \nu$ from the center point $p$ is $\nu/\sqrt{2} (=\delta)$. \hfill $\Box$
\end{proof}

Consider a grid with cells of size $\nu \times \nu$ over the region ${\cal R}$. Like Funke et al. \cite{FKKLS07}
we also snap the center of each $d \in {\cal D}$ to the closest vertex of the grid and set its radius to
$(1-\delta)$. Let ${\cal D}'$ be the set of disks with radius $(1-\delta)$ after snapping their centers.
Let ${\cal R}'$ be a square of size $4 \times 4$ on the Euclidean plane contained in ${\cal R}$. We define the
regions {\it TOP, DOWN, LEFT, RIGHT, TOP-LEFT, TOP-RIGHT, DOWN-LEFT, DOWN-RIGHT} as shown in Fig. \ref{figure-5}.
We now construct a set ${\cal D}_{RS} (\subseteq {\cal D}')$ such that no disk $d \not \in {\cal D}_{RS}$ can
participate to the optimum solution for covering the region ${\cal R}'$ by ${\cal D}'$. The pseudo code
for construction of ${\cal D}_{RS}$ is given in Algorithm \ref{alg:DRS}.

\begin{figure}[h]
\centering
\includegraphics[width=1.5in]{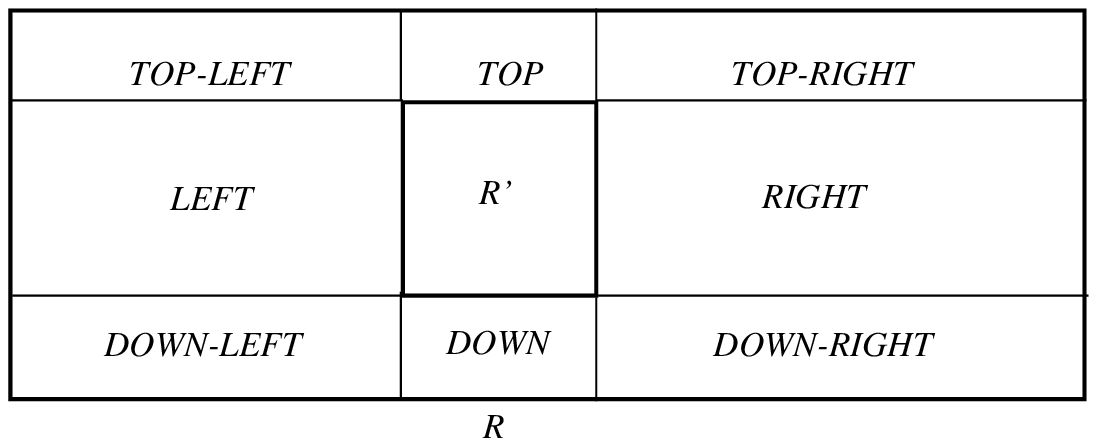}
\caption{Definition of different regions}
\label{figure-5}
\end{figure}

\begin{definition}
A disk $d \in {\cal D}'$ dominates another disk $d' \in {\cal D}'$ with respect to the region ${\cal R}'$ if
$d \cap {\cal R}' \supseteq d' \cap {\cal R}'$.
\end{definition}

\begin{algorithm}
\caption{$Algorithm\_{\cal D}_{RS}({\cal D}', {\cal R}', \nu)$}
\begin{algorithmic}[1]
\STATE {\bf Input:} Set ${\cal D}'$ of disks, a square region ${\cal R}'$ of size $4 \times 4$ and grid size $\nu$. \\
\STATE {\bf Output:} ${\cal D}_{RS} (\subseteq {\cal D}')$ such that no disk $d \not \in {\cal D}_{RS}$ can participate
                        to the optimum solution for covering the region ${\cal R}'$ by ${\cal D}'$. \\

\STATE Set ${\cal D}_{RS} \leftarrow \emptyset, {\cal D}_t \leftarrow \emptyset$
\STATE For each disk $d \in {\cal D}'$ having center in ${\cal R}'$, ${\cal D}_{RS} = {\cal D}_{RS} \cup \{d\}$
\STATE For each horizontal grid line segment $h$ in {\it LEFT} add a disk $d \in {\cal D}'$ to ${\cal D}_{RS}$
if (i) $d \cap {\cal R}' \neq \emptyset$, (ii) center of $d$ lies on $h$ and (iii) center of $d$ closest to
${\cal R}'$ than other disks having center on $h$. Similarly add disks to ${\cal D}_{RS}$ for the
regions {\it RIGHT, TOP} and {\it DOWN}.

\FOR{(each horizontal grid line segment $h$ in {\it TOP-RIGHT} from bottom to top)}
\STATE Add a disk $d \in {\cal D}'$ to ${\cal D}_t$ if (i) $d \cap {\cal R}' \neq \emptyset$, (ii) center of $d$
lies on $h$ and (iii) there does not exits any disk $d' \in {\cal D}_t$ dominating $d$.
\ENDFOR
\STATE ${\cal D}_{RS} = {\cal D}_{RS} \cup {\cal D}_t$
\STATE repeat steps 6-9 for {\it TOP-LEFT}, {\it DOWN-LEFT} and {\it DOWN-RIGHT}.
\STATE Return ${\cal D}_{RS}$
\end{algorithmic}
\label{alg:DRS}
\end{algorithm}

\begin{lemma}\label{lemma-5}
If $d \in {\cal D}'$ and $d \not \in {\cal D}_{RS}$, then $d$ can not participate to the optimum solution for covering
${\cal R}'$ by minimum number of disks in ${\cal D}'$.
\end{lemma}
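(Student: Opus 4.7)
The plan is to show the contrapositive: if $d \in {\cal D}'$ is not in ${\cal D}_{RS}$, then either $d \cap {\cal R}' = \emptyset$ (so $d$ contributes nothing inside ${\cal R}'$) or there exists $d' \in {\cal D}_{RS}$ dominating $d$ with respect to ${\cal R}'$, i.e.\ $d' \cap {\cal R}' \supseteq d \cap {\cal R}'$. In either case, any cover of ${\cal R}'$ containing $d$ can be transformed into one of no larger cardinality that omits $d$, so $d$ cannot be essential to any optimum. The argument will proceed by a case analysis on the location of the center $\alpha(d)$.

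If $\alpha(d)$ lies inside ${\cal R}'$, then step 4 of Algorithm \ref{alg:DRS} would already have placed $d$ in ${\cal D}_{RS}$, a contradiction. If $\alpha(d)$ lies outside every one of the eight regions pictured in Figure \ref{figure-5}, then $\alpha(d)$ is at distance strictly greater than $(1-\delta)$ from every point of ${\cal R}'$, and therefore $d \cap {\cal R}' = \emptyset$.

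Next, suppose $\alpha(d)$ lies in a side region, say {\it LEFT} (the cases {\it RIGHT}, {\it TOP}, {\it DOWN} are symmetric). Let $h$ be the horizontal grid line through $\alpha(d)$ and let $d^\star$ be the disk on $h$ selected by step 5, i.e.\ the disk whose center is closest to ${\cal R}'$ among those with center on $h$. Writing $\alpha(d) = (c_x, c_y)$ and $\alpha(d^\star) = (c_x^\star, c_y)$ with $c_x \le c_x^\star$, both to the left of ${\cal R}'$, every $(x,y) \in {\cal R}'$ satisfies $x \ge c_x^\star \ge c_x$, so $(x-c_x^\star)^2 \le (x-c_x)^2$ and hence $(x-c_x^\star)^2 + (y-c_y)^2 \le (x-c_x)^2 + (y-c_y)^2$. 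Consequently $d \cap {\cal R}' \subseteq d^\star \cap {\cal R}'$, and $d^\star \in {\cal D}_{RS}$ supplies the required dominator.

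Finally, suppose $\alpha(d)$ lies in a corner region, say {\it TOP-RIGHT}. Algorithm \ref{alg:DRS} inspects $d$ when its horizontal grid line is processed. Since $d \notin {\cal D}_t$, one of the admission conditions must have failed: either $d \cap {\cal R}' = \emptyset$, or some disk $d' \in {\cal D}_t$ already present at that moment dominated $d$ with respect to ${\cal R}'$; since nothing is ever removed from ${\cal D}_t$, that $d'$ remains in ${\cal D}_t \subseteq {\cal D}_{RS}$ at termination, again providing a dominator. The main geometric ingredient is the one-variable monotonicity argument used in the side-region case; the corner case reduces to tracking the invariant built into the definition of ${\cal D}_t$, and the ``outside all eight regions'' case will require a brief check that the regions in Figure \ref{figure-5} collectively cover every point within distance $(1-\delta)$ of ${\cal R}'$.
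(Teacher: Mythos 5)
Your proof is correct and follows essentially the same route as the paper's: reduce to the case where $\alpha(d)$ lies outside ${\cal R}'$, and exhibit a dominating disk $d'\in{\cal D}_{RS}$ via the closest-center disk on the relevant grid line (side regions) or the domination invariant maintained for ${\cal D}_t$ (corner regions). You are in fact more careful than the paper, which only writes out the \emph{LEFT} case and dismisses the corner regions and the exchange argument with ``similarly''; your explicit monotonicity computation and the treatment of disks missing ${\cal R}'$ entirely fill in details the paper omits.
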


\begin{proof}
Since $d \not \in {\cal D}_{RS}$ the center of $d$ is in outside of ${\cal R}'$ (see line number 4 of Algorithm \ref{alg:DRS}).
With out loss of generality assume that center of $d$ is in {\it LEFT} and on the horizontal grid line segment $h$. By our
construction of the set ${\cal D}_{RS}$, there exists a disk $d' \in {\cal D}_{RS}$ centered on $h$ such that
(a) $d' \cap {\cal R}' \neq \emptyset$, (b) center of $d'$ lies on $h$ and (c) center of $d'$ closest to ${\cal R}'$ than other
disks having center on $h$. Therefore, $d'$ dominates $d$. Similarly, we can prove for other cases also. Thus, the lemma follows.
\hfill $\Box$
\end{proof}

\begin{lemma}\label{lemma-6}
$|{\cal D}_{RS}| \leq \frac{16}{\nu^2}+\frac{20}{\nu}$.
\end{lemma}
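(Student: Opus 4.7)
My plan is to decompose ${\cal D}_{RS}$ according to which of the nine regions shown in Figure~\ref{figure-5} contains the snapped centre of each disk, and bound each contribution separately. A preliminary observation I will use throughout is that every disk in ${\cal D}'$ has radius $1-\delta < 1$, so only disks whose centres lie within distance $1$ of ${\cal R}'$ can possibly intersect it; consequently only a thin $(1-\delta)$-neighbourhood of ${\cal R}'$ is ever relevant inside each exterior region.

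The easy pieces come first. For the contribution of line~4 of Algorithm~\ref{alg:DRS} I would simply count grid vertices inside the $4\times 4$ square ${\cal R}'$: since the spacing is $\nu$, there are at most $16/\nu^2$ such vertices, up to boundary corrections that I will absorb into the $1/\nu$ term. For the four strips LEFT, RIGHT, TOP, DOWN, line~5 explicitly keeps at most one disk per grid line crossing the strip, and each strip has at most $4/\nu + 1$ such grid lines because ${\cal R}'$ has extent $4$ in the relevant direction; summing over the four strips I obtain at most $16/\nu$ disks, again modulo an additive constant.

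The step that genuinely needs geometry, and which I expect to be the main obstacle, is bounding the four corner contributions produced by lines~6--10. Here I plan to prove a domination claim for TOP-RIGHT (the other three corners being symmetric): if $d_1, d_2 \in {\cal D}'$ have centres on the same horizontal grid line $h$ in TOP-RIGHT with $d_1$'s centre having the smaller $x$-coordinate, then $d_1$ dominates $d_2$ with respect to ${\cal R}'$. The reason is that every $(a,b) \in {\cal R}'$ lies to the lower-left of both centres, so $(a-x_1)^2 + (b-y)^2 \leq (a-x_2)^2 + (b-y)^2$, giving $d_1 \cap {\cal R}' \supseteq d_2 \cap {\cal R}'$. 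From this it follows that the loop retains at most one disk per horizontal grid line in the corner, and since the relevant horizontal grid lines must lie within vertical distance $1-\delta < 1$ of the top edge of ${\cal R}'$, there are at most $1/\nu + 1$ of them; summed over the four corners this contributes at most $4/\nu$ disks.

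Adding the three counts gives $16/\nu^2 + 16/\nu + 4/\nu = 16/\nu^2 + 20/\nu$, up to additive $O(1)$ corrections which I absorb into the leading $1/\nu$ term to match the stated bound. The subtler points I will need to handle are justifying the implicit processing order (closest candidate first) that makes the ``one disk per line'' count tight in each corner, and checking that the domination argument goes through with the correct sign conventions in each of the three remaining corner orientations.
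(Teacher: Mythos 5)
Your decomposition is essentially the paper's own: $16/\nu^2$ grid vertices inside ${\cal R}'$, one disk per grid line segment crossing each of the four side strips, and one disk per horizontal grid line in each corner (the paper groups these as $12/\nu + 8/\nu$ rather than your $16/\nu + 4/\nu$, but both sum to $20/\nu$). Your domination argument for the corners and the distance-$1$ restriction on relevant corner lines are exactly the facts the paper's terse proof implicitly relies on, so the proposal is correct and follows the same route.
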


\begin{proof}
The lemma follows from the following facts: (i) the maximum number of grid vertices in ${\cal R}'$ is $\frac{16}{\nu^2}$ and 
each of them can contribute one disk in ${\cal D}_{RS}$, (ii) the maximum number of horizontal grid line segment in the regions 
{\it TOP-LEFT, LEFT, DOWN-LEFT, DOWN-RIGHT, RIGHT} and {\it TOP-RIGHT} that can contribute a disk in ${\cal D}_{RS}$ is 
$\frac{12}{\nu}$ and (iii) the maximum number of vertical grid line segment in the regions {\it TOP} and {\it DOWN} that can 
contribute a disk in ${\cal D}_{RS}$ is $\frac{8}{\nu}$. Thus, the lemma follows. \hfil $\Box$
\end{proof}

From Observation \ref{observation-2} and Lemma \ref{lemma-6}, we can compute a cover of ${\cal R}'$ by ${\cal D}'' (\subseteq {\cal D})$
with minimum number of disks using brute-force method, where ${\cal D}''$ is the set of disks of unit radius corresponding to the disks in
${\cal D}_{RS}$. The running time of the brute-force algorithm is $O(2^{\frac{16}{\nu^2}+\frac{20}{\nu}})$ (see Lemma \ref{lemma-6}).
Though, the worst-case running time of this brute-force algorithm is exponential in $\frac{1}{\nu^2}$, in practice, it is very small. 
Note that time complexity of our proposed algorithm is less than that of 4-factor approximation algorithm proposed by Funke et al. 
\cite{FKKLS07}. We now describe approximation factor of our propose algorithm for RRC problem in reduce radius setup.

\begin{theorem}
In the reduce radius setup, RRC problem has an 2.25-factor approximation algorithm.
\end{theorem}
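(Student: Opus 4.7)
The plan is to tile $\mathcal{R}$ by $4\times 4$ squares $\mathcal{R}'_1,\ldots,\mathcal{R}'_t$ (possibly after a global translation). For each tile $\mathcal{R}'_i$, I would run Algorithm~\ref{alg:DRS} to obtain a candidate subfamily and then brute-force the minimum cover of $\mathcal{R}'_i$ by the corresponding unit disks in $\mathcal{D}$; the algorithm returns the union $\mathcal{A}_{out}$ of these local covers. Observation~\ref{observation-2} immediately gives correctness: every snapped reduced disk is contained in its original unit disk, so the chosen unit disks cover each tile and hence all of $\mathcal{R}$.

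To analyse the approximation ratio, let $opt$ denote the optimum reduced-radius cover of $\mathcal{R}$, and for each $i$ let $opt_i\subseteq opt$ collect the opt disks whose reduced copies meet $\mathcal{R}'_i$. The key estimate is $|\mathrm{BF}(\mathcal{R}'_i)|\le|opt_i|$, where $\mathrm{BF}(\mathcal{R}'_i)$ denotes the size of the brute-force cover on tile $i$. I would derive it by lifting the dominance argument behind Lemma~\ref{lemma-5} from snapped reduced disks to the corresponding unit disks: two unit disks whose centres lie on the same horizontal grid line inside one of LEFT, RIGHT, TOP, DOWN, or on the suitable grid line of one of the four corner regions, still satisfy the same containment on $\mathcal{R}'_i$. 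This lets me swap each disk of $opt_i$, one-for-one, for a dominating disk in the candidate family, without losing coverage of $\mathcal{R}'_i$.

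Summing over the tiles yields
\[
|\mathcal{A}_{out}|\;\le\;\sum_{i=1}^{t}|\mathrm{BF}(\mathcal{R}'_i)|\;\le\;\sum_{d\in opt}\bigl|\{i : d\cap\mathcal{R}'_i\neq\emptyset\}\bigr|,
\]
so the ratio is controlled by the average number of $4\times 4$ tiles a single opt disk can meet. I would apply a shifting argument: under a uniformly random translation $(s_1,s_2)\in[0,4)^2$ of the tile grid, the expected number of tiles that a fixed unit disk intersects factors along the two axes, each factor being $1+2\cdot(1/4)=3/2$, giving $(3/2)^2=9/4$. Hence some translation realises $\sum_i|opt_i|\le(9/4)|opt|$; because only $O(m^2)$ combinatorially distinct translations arise from the snapped centres, the algorithm enumerates them and keeps the one that minimises the output, delivering the claimed $2.25$ factor.

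The most delicate step, I expect, is the dominance extension needed for $|\mathrm{BF}(\mathcal{R}'_i)|\le|opt_i|$: Lemma~\ref{lemma-5} is phrased for the snapped reduced disks that populate $\mathcal{D}'$, so one has to verify that the nearest-grid-line/closest-to-$\mathcal{R}'$ argument carries through when each candidate is replaced by its original unit disk (which is what the algorithm ultimately returns), so that the disk-by-disk exchange of $opt_i$ into the candidate family for $\mathcal{R}'_i$ is legitimate and the approximation chain does not leak any constant factor.
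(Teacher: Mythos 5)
Your proposal follows the paper's proof essentially verbatim: solve each $4\times 4$ tile optimally over the restricted candidate set ${\cal D}_{RS}$ obtained from Algorithm~\ref{alg:DRS}, and combine the tiles with the Hochbaum--Maass shifting strategy, whose factor $(1+1/2)^2=9/4$ for diameter-$2$ disks you simply re-derive via the standard random-shift averaging argument instead of by citation. The ``delicate step'' you flag --- transferring the domination and local-optimality statements from the snapped reduced disks of ${\cal D}'$ (whose centres do lie on grid lines) to the original unit disks that the algorithm actually outputs and to the unreduced-centre optimum --- is precisely the point the paper also leaves implicit when it asserts that the $4\times 4$ subproblem is solved optimally, so you have identified, not introduced, the one gap in this argument.
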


\begin{proof}
From the above discussion, for rectangle of size $4 \times 4$, we have optimum solution for RRC problem. Note that the
diameter of each disk of the RRC instance is 2. Therefore, we can apply shifting strategy described by Hochbaum and Maass \cite{HM85}
to solve RRC problem and the approximation factor is $(1+1/2)^2 = 2.25$. Thus, the theorem follows.
\hfil $\Box$
\end{proof}

\section{Conclusion}
In this paper, we have proposed a PTAS for LSDUDC problem, improving previous 2-factor approximation result. Using this PTAS,
we proposed an $(9+\epsilon)$-factor approximation algorithm for DUDC problem, improving previous 15-factor approximation result
for the same problem. The running time of our proposed algorithm for $\epsilon = 3$ (i.e., approximation factor of DUDC problem
is 12)is same as the running time of 15-factor approximation algorithm. We have also proposed an $(9+\epsilon)$-factor
approximation algorithm for RRC problem, which runs in $O(\max(m^8, m^{4(1+3/\epsilon)})$ time. In the reduce radius setup, we 
proposed an 2.25-factor approximation algorithm. The previous best known approximation factor was 4 \cite{FKKLS07}. The running 
time of our proposed algorithm for RRC problem in reduce radius setup is less than that of 4-factor approximation algorithm 
proposed in \cite{FKKLS07} for reasonably small values of $\delta (=\frac{\nu}{\sqrt{2}})$, where $\delta$ is the radius reduction 
parameter. Since the number of disks participating in the solution of $4 \times 4$ square is constant for fixed value of $\delta$, 
the number of disks participating in the solution of $L \times L$ square is constant. Therefore, using the shifting strategy proposed 
by Hochbaum and W. Maass \cite{HM85}, we can design a PTAS for the RRC problem in reduce radius setup.

\end{document}